\newtheorem{example}{Example}
\newtheorem{theorem}{Theorem}
\newtheorem{lemma}{Lemma}
\newtheorem{definition}{Definition}
\title{Shield Synthesis for LTL Modulo Theories}
\author{
	Andoni Rodríguez \textsuperscript{\rm 1,\rm 2,*}, 
	Guy Amir \textsuperscript{\rm 3,*}, 
	Davide Corsi\textsuperscript{\rm 4}, 
	César Sánchez\textsuperscript{\rm 1},
	Guy Katz \textsuperscript{\rm 5}
}
\begin{document}

\maketitle
\let\thefootnote\relax\footnotetext{* Main co-authors.}

\newcommand{\Nat}{\mathbb{N}}
\newcommand{\KWD}[1]{\textit{#1}}
\newcommand{\topoly}{\KWD{topoly}}
\newcommand{\inside}{\KWD{inside}}
\newcommand{\outside}{\KWD{outside}}
\newcommand{\toconf}{\KWD{toconf}}
\newcommand{\True}{\KWD{true}}
\newcommand{\False}{\KWD{false}}
\newcommand{\BL}{\KWD{bl}}
\newcommand{\DefinedAs}{\,\stackrel{\text{def}}{=}\,}
\newcommand{\Confs}{\KWD{Confs}}

\newcommand{\todo}[1]{{\color{red}{\textbf{[TODO]} #1}}}

\newcommand{\andoni}[1]{\marginpar{\textcolor{red}{AuthorA: #1}}}
\newcommand{\cesar}[1]{\marginpar{\textcolor{red}{AuthorC: #1}}}
\newcommand{\davide}[1]{\marginpar{\textcolor{green}{AuthorD: #1}}}
\newcommand{\guy}[1]{\marginpar{\textcolor{orange}{AuthorGK: #1}}}
\newcommand{\guyamir}[1]{\marginpar{\textcolor{blue}{AuthorGA: #1}}}

\newcommand{\cesartodo}[1]{\todo[linecolor=blue,backgroundcolor=blue!25,bordercolor=blue]{Cesar: #1}}
\newcommand{\cesartext}[1]{\todo[inline,linecolor=blue,backgroundcolor=blue!25,bordercolor=blue]{Cesar: #1}}
\newcommand{\andonitodo}[1]{\todo[linecolor=red,backgroundcolor=red!25,bordercolor=red]{Andoni: #1}}
\newcommand{\andonitext}[1]{\todo[inline,linecolor=red,backgroundcolor=red!25,bordercolor=red]{Andoni: #1}}

\newcommand{\AP}{\ensuremath{\mathsf{AP}}\xspace}
\newcommand{\alphabet}{\mathrm{\Sigma}}
\newcommand{\DefOR}{\ensuremath{\hspace{0.2em}\big|\hspace{0.2em}}}

\newcommand{\ShT}{S_{\calT}}
\newcommand{\tupleof}[1]{\langle#1\rangle}

\newcommand{\Always}{\LTLsquare}
\newcommand{\Event}{\LTLdiamond} 
\newcommand{\Next}{\LTLcircle}
\newcommand{\LTLPrev}{\LTLcircleminus}
\newcommand{\PrevNoFirst}{\LTLcircletilde}
\newcommand{\HasAlwaysBeen}{\LTLsquareminus}
\newcommand{\Once}{\LTLdiamondminus}
\newcommand{\Since}{\mathbin{\mathcal{S}}}
\newcommand{\BackTo}{\mathbin{\mathcal{B}}}
\newcommand{\WeakPrev}{\LTLcircletilde}
\newcommand{\U}{\mathbin{\mathcal{U}}}
\newcommand{\R}{\mathbin{\mathcal{R}}}

\newcommand{\Not}{\mathclose{\neg}}
\renewcommand{\And}{\mathrel{\wedge}}
\newcommand{\Or}{\mathrel{\vee}}
\newcommand{\Impl}{\mathrel{\rightarrow}}
\newcommand{\Into}{\Impl}

\newcommand{\LTL}{\ensuremath{\textup{LTL}}\xspace}
\newcommand{\LTLt}{\ensuremath{\textup{LTL}_{\calT}}\xspace}
\newcommand{\LTLB}{\ensuremath{\textup{LTL}_{\mathbb{B}}}\xspace}
\newcommand{\LTLZ}{\ensuremath{\textup{LTL}_{\mathbb{Z}}}\xspace}

\newcommand{\configuration}{choice\xspace}
\newcommand{\configurations}{choices\xspace}
\newcommand{\Configuration}{Choice\xspace}
\newcommand{\Configurations}{Choices\xspace}

\newcommand{\reaction}{reaction\xspace}
\newcommand{\reactions}{reactions\xspace}
\newcommand{\xbar}{\overline{x}}
\newcommand{\ybar}{\overline{y}}

\newcommand{\Lit}{\ensuremath{\mathit{Lit}}\xspace}
\newcommand{\phiT}{\ensuremath{\varphi_{\mathcal{T}}}\xspace}
\newcommand{\phiB}{\ensuremath{\varphi_{\mathbb{B}}}\xspace}
\newcommand{\phiLegal}{\varphi^\textit{legal}}
\newcommand{\phiExtra}{\varphi^{\textit{extra}}}
\newcommand{\phiEx}{\ensuremath{\phiExtra}\xspace}
\newcommand{\xs}{\ensuremath{\overline{x}}\xspace}
\newcommand{\ys}{\ensuremath{\overline{y}}\xspace}
\newcommand{\zs}{\ensuremath{\overline{z}}\xspace}
\newcommand{\es}{\ensuremath{\overline{e}}\xspace}
\newcommand{\sss}{\ensuremath{\overline{s}}\xspace}
\newcommand{\vs}{\ensuremath{\overline{v}}\xspace}
\newcommand{\ws}{\ensuremath{\overline{w}}\xspace}
\newcommand{\Pt}{\ensuremath{\mathit{pt}}}
\newcommand{\Nt}{\ensuremath{\mathit{nt}}}
\newcommand{\PT}[1]{#1^p}
\newcommand{\NT}[1]{#1^a}

\newcommand{\mycal}[1]{\ensuremath{\mathcal{#1}}\xspace}
\newcommand{\calC}{\mycal{C}}
\newcommand{\react}{\ensuremath{\textit{react}}}
\newcommand{\calR}{\mycal{R}}
\newcommand{\calT}{\mycal{T}}
\newcommand{\calQ}{\mycal{Q}}
\newcommand{\VR}{\ensuremath{\textit{VR}}\xspace}
\newcommand{\bconf}{\ensuremath{\textit{bconf}}\xspace}

\newcommand{\VE}{\ensuremath{\overline{v}_e}}
\newcommand{\VS}{\ensuremath{\overline{v}_s}}
\newcommand{\LoopStop}{\text{loopStop}}
\newcommand{\GetReaction}{\text{getReact}}
\newcommand{\ValidityReduce}{\text{validityReduce}}
\newcommand{\ThZ}{\mathcal{T}_\mathbb{Z}}
\newcommand{\ThR}{\mathcal{T}_\mathbb{R}}

\newcommand{\GameT}{\mathcal{G}^{\calT}}
\newcommand{\GameB}{\mathcal{G}^{\mathbb{B}}}
\newcommand{\suchThat}{\textit{. }}

\newcommand{\qrprec}{\mathrel{\preceq}}

\newcommand{\qreact}{\ensuremath{\textit{qreact}}}
\newcommand{\Vars}{\mathit{Vars}}
\newcommand{\Bool}{\mathbb{B}}
\newcommand{\Part}{\rightharpoonup}



\newcommand{\rhoT}{\ensuremath{\rho_{\mathcal{T}}}\xspace}
\newcommand{\rhoB}{\ensuremath{\rho_{\mathbb{B}}}\xspace}
\newcommand{\innerLoop}{\ensuremath{\textit{inner}\_\textit{loop}}}
\newcommand{\true}{\top}

\newcommand{\mysubsection}[1]{\medskip\noindent\textbf{#1}}

\newcommand{\relu}{\text{ReLU}\xspace}

\newcommand{\sat}{\texttt{SAT}\xspace}
\newcommand{\unsat}{\texttt{UNSAT}\xspace}
\newcommand{\timeout}{\texttt{TIMEOUT}\xspace}
\newcommand{\fail}{\texttt{FAIL}\xspace}

\newcommand{\turnLeft}{\texttt{LEFT}\xspace}
\newcommand{\turnRight}{\texttt{RIGHT}\xspace}

\newcommand{\vxs}{\ensuremath{v_{\xs}}\xspace}
\newcommand{\vys}{\ensuremath{v_{\ys}}\xspace}
\newcommand{\ves}{\ensuremath{v_{\es}}\xspace}
\newcommand{\vsss}{\ensuremath{v_{\sss}}\xspace}
\newcommand{\dom}{\mathbb{D}}

\newcommand{\TurnLeft}{\ensuremath{\texttt{turn\_left}}\xspace}
\newcommand{\TurnRight}{\ensuremath{\texttt{turn\_right}}\xspace}
\newcommand{\Provider}{\textbf{provider}\xspace}
\newcommand{\Partitioner}{\textbf{partitioner}\xspace}
\newcommand{\Choicer}{\textbf{choicer}\xspace}
\newcommand{\CB}{\ensuremath{C_{\mathbb{B}}}\xspace}
\newcommand{\WB}{\ensuremath{W}\xspace}
\newcommand{\CT}{\ensuremath{C_{\calT}}\xspace}
\newcommand{\WT}{\ensuremath{\mathcal{W}_{\calT}}\xspace}

\newcommand{\MVR}{\KWD{MVR}\xspace}
\newcommand{\FVR}{\KWD{FVR}\xspace}
\newcommand{\Vals}{\KWD{val}}
\newcommand{\WR}{\KWD{WR}\xspace}
\newcommand{\COMPONENT}[1]{\textsf{#1}\xspace}
\newcommand{\find}{\COMPONENT{find}}
\newcommand{\getchoice}{\COMPONENT{getchoice}}
\newcommand{\Qnow}{\ensuremath{Q_{\textit{now}}}\xspace}

\begin{abstract}
In recent years, Machine Learning (ML) models 
have achieved remarkable success in various domains.
However, these models also tend to demonstrate unsafe behaviors, precluding their deployment in safety-critical systems. 
To cope with this issue, ample research focuses on developing methods that guarantee the safe behaviour of a given ML model. 
A prominent example is \emph{shielding} which incorporates an external component 
(a ``shield'') 
that blocks unwanted behavior.
Despite significant progress, shielding suffers from a main setback: it is  currently geared towards properties 
encoded solely in propositional logics (e.g.,  LTL) and is unsuitable for  richer logics. This, in turn, limits the widespread applicability of shielding in many real-world systems. 
%
%
In this work, we address this gap, and extend shielding to LTL modulo theories, by building upon recent advances in reactive synthesis modulo theories. This allowed us to develop a novel approach for generating shields conforming to complex safety specifications in these more expressive, logics.
%
We evaluated our shields and demonstrate their ability to handle rich data with temporal dynamics.
%
%
To the best of our knowledge, this is the first approach for synthesizing shields for such expressivity. 
%

\end{abstract}

\section{Introduction} \label{sec:intro}
%
%
%
Recently, DNN-based agents trained using Deep Reinforcement Learning (DRL) have been shown to successfully control
reactive systems of high complexity (e.g., \cite{MaFa20})
, such as robotic platforms.
However, despite their success, DRL controllers 
still suffer from various safety issues; e.g., small perturbations to their inputs, resulting either from noise or from a 
 malicious adversary, 
can cause even state-of-the-art agents to react unexpectedly (e.g., \cite{GoShSz14})
.
This issue raises severe concerns regarding the deployment of DRL-based agents in safety-critical reactive systems.

%
In order to cope with these DNN reliability concerns, the formal methods community has recently put forth various tools and techniques that rigorously ensure the safe behaviour of DNNs (e.g., \cite{KaBaDiJuKo17})
, and specifically, of DRL-controlled reactive systems (e.g., \cite{BaAmCoReKa23})
. 
One of the main approaches that is gaining popularity, is \emph{shielding}~\cite{BlKoKoWa15,AlBloEh18}, i.e., the incorporation of an external component (a ``shield'') that \emph{forces} an agent to behave safely according to a given specification. 
This specification $\varphi$ is usually expressed as a propositional formula, in which the atomic propositions represent the inputs ($I$) and outputs ($O$) of the system, controlled by the DNN in question. 
Once $\varphi$ is available, shielding seeks to guarantee that \emph{all} behaviors of the given system $D$ satisfy $\varphi$ through the means of a shield $S$: whenever the system encounters an input $I$ that triggers an erroneous output (i.e., $O : D(I)$ for which $\varphi(I,O)$ does not hold), $S$ corrects $O$ and replaces it with another action $O'$,  to ensure that $\varphi$(I, O') does hold. 
Thus, the combined system  $D \cdot S$  never violates $\varphi$. 
%
%
Shields are appealing for multiple reasons: 
%
they do not require ``white box'' access to $D$, 
%
a single shield $S$ can be used for multiple variants of $D$, 
%
it is usually computationally cheaper than static methods like DNN verification etc.
%
Moreover, shields are intuitive for practitioners, since they are synthesized based on the required $\varphi$.
%
%
%
However, despite significant progress, modern shielding methods still suffer from a main setback: 
they are only applicable to specifications in which the inputs/outputs are over Boolean atomic propositions.  
This allows users to encode only discrete specifications, typically in \emph{Linear Temporal Logic} (LTL).
Thus, as most real-world systems rely on rich data specifications, this 
precludes the use of shielding  in various such domains, such as continuous input spaces.

%

In this work, we address this gap and present a novel approach for shield synthesis that makes use of LTL modulo theories ($\LTLt$), 
where  Boolean propositions are extended to literals from a (multi-sorted) first-order theory $\calT$.
%
%
%
%
%
Concretely, we leverage Boolean abstraction methods~\cite{RoCe23a}, which transform $\LTLt$ specifications into equi-realizable pure (Boolean) LTL specifications.
We combine Boolean abstraction with 
reactive $\LTLt$ synthesis~\cite{RoCe24}, extending the common LTL shielding theory into a $\LTLt$ shielding theory.
Using $\LTLt$ shielding, we are able to  construct shields for more expressive specifications. This, in turn, allows us to override unwanted actions in a (possibly infinite) domain of $\calT$, and guarantee the safety of DNN-controlled systems in such complex scenarios. 
In summary, our contributions are: (1) developing two methods for shield synthesis over $\LTLt$ and presenting their proof of correctness; (2) an analysis of the impact of the Boolean abstractions in the precision of shields; (3) a formalization of how to construct  optimal shields using objective functions; and (4) an empirical evaluation that shows the applicability of our techniques. 

\section{Preliminaries} \label{sec:prelim}

\subsubsection{LTL and $\LTLt$.}

We start from LTL~\cite{Pn77,MaPn95}, which has the following syntax:
\[
  \varphi  ::= \true \DefOR a \DefOR \varphi \lor \varphi \DefOR \neg \varphi
  \DefOR \Next \varphi \DefOR \varphi \U\varphi, 
\]
where  $a\in\AP$ is an \emph{atomic proposition}, 
$\{\land,\neg\}$ are the common Boolean operators of \emph{conjunction} and
\emph{negation}, respectively, and
$\{\Next,\U\}$ are the \emph{next} and \emph{until} temporal
operators, respectively.
Additional temporal operators include $\calR$ (\emph{release}), $\Event$(\emph{finally}), and
 $\Always$ (\emph{always}), which can be derived from the syntax above.
Given a set of atomic propositions $\overline{a}$ we use $\Vals(\overline{a})$ for a set of possible valuations of variables in $\overline{a}$ (i.e. $\Vals(\overline{a})=2^{\overline{a}}$), and we use $v_{\overline{a}}$ to range over $\Vals(\overline{a})$.
We use $\Sigma=\Vals(\AP)$.
The semantics of LTL formulas associates traces $\sigma\in\Sigma^\omega$ with
\LTL fomulas (where $\sigma \models \top$ always holds, and $\Or$ and $\neg$ are standard):
\[
  \begin{array}{l@{\hspace{0.3em}}c@{\hspace{0.3em}}l}
    \sigma \models a & \text{iff } & a \in\sigma(0) \\
     \sigma \models \Next \varphi & \text{iff } & \sigma^1\models \varphi \\
     \sigma \models \varphi_1 \U \varphi_2 & \text{iff } & \text{for some } i\geq 0\;\; \sigma^i\models \varphi_2, \text{ and } \\
    && \;\;\;\;\;\text{for all } 
    \text{for all } 0\leq j<i, \sigma^j\models\varphi_1 \\
  \end{array}
\]
A safety formula $\varphi$ is such that for every failing trace $\sigma\not\models\varphi$ there is a finite prefix $u$ of $\sigma$, such that all $\sigma'$ extending $u$ also falsify $\varphi$ (i.e. $\sigma'\not\models\varphi$).

 The syntax of LTL modulo theory (\LTLt) replaces atoms $a$ by literals $l$ from some theory $\calT$.
 %
 Even though we use multi-sorted theories, for clarity of explanation we assume that $\calT$ has only one sort and use $\dom$ for the domain, the set that populates the sort of its variables.
For example, the domain of linear integer arithmetic $\ThZ$ is
$\mathbb{Z}$ and we denote this by $\dom(\ThZ)=\mathbb{Z}$.
Given an \LTLt formula $\varphi(\zs)$ with variables $\zs$ the semantics of \LTLt now associate traces $\sigma$ (where each letter is a valuation of $\overline{z}$, i.e., a mapping from $\overline{z}$ into $\dom$) with $\LTLt$ formulae.
The semantics of the Boolean and temporal operators are as in \LTL, and for literals:
\[
  \begin{array}{l@{\hspace{0.3em}}c@{\hspace{0.3em}}l@{\hspace{0em}}}
    \sigma \models l & \text{iff } & \text{the valuation $\sigma(0)$ of $\zs$ makes $l$ true according to $\calT$} \\
  \end{array}
\]

%
%
%
%
%
%

\subsubsection{The Synthesis Problem.}

Reactive LTL synthesis~\cite{Th08,PiPnSa06}
is the task of producing a system that satisfies a given LTL specification $\varphi$, where
atomic propositions in $\varphi$ are split into variables
controlled by the environment (``input variables'') and by the system (``output variables''), denoted by $\es$ and $\sss$, respectively.
Synthesis corresponds to a game where, in each turn, the
environment player produces values for the input propositions, and the system player
responds with values of the output propositions.
A play is an infinite sequence of turns, i.e., an infinite interaction of the system with the environment.
%
%
A strategy  for the system is a tuple $\CB: \tupleof{Q,q_0,\delta,o}$
where $Q$ is a finite set of states, $q_0\in Q$ is the inital state,
$\delta:Q\times \Vals(\es) \Into Q$ is the transition function and
$o:Q\times\Vals(\es)\Into\Vals(\sss)$ is the output function.
\CB is said to be \emph{winning}
for the system if all the possible plays played
according to the strategy satisfy the LTL formula $\phiB$.
In this paper we use ``strategy'' and ``controller'' interchangeably.

We also introduce the notion of \emph{winning region} (WR), which encompasses all possible winning moves for the system in safety formulae.
A winning region $\WR: \tupleof{Q,I,T}$ is a tuple where $Q$ is a finite set of states, $I\subseteq Q$ is a set of initial states and $T:Q\times \Vals(\es)\Into{}2^{(Q\times\Vals(\sss))}$ is the transition relation, which provides for a given state $q$ and input $\ves$, all the possible pairs of legal successor and output $(q',\vsss)$.
For a safety specification, every winning strategy is ``included'' into the WR (i.e., there is an embedding map).
%
%
%
%
LTL realizability is the decision problem of whether there is a winning strategy for the system 
(i.e., check if $\text{WR} \neq \emptyset$), 
while LTL synthesis is the computational problem of producing one. 



However,  in $\LTLt$ synthesis \cite{RoCe23b,RoCe24}, the specification is expressed in a richer logic where propositions are replaced by literals from some $\calT$.
In $\LTLt$ the (first order) variables in specification $\phiT$ are still split into those controlled
by the environment ($\xs$), and those controlled
by the system ($\ys$), where $\xs \cap \ys = \emptyset$.
We use $\phiT(\xs,\ys)$ to empasize that $\xs\cup\ys$ are all the
variables occurring in $\phiT$.
The alphabet is now $\Sigma_{\calT}=\Vals(\xs\cup\ys)$ (note that now valuations map a variable $x$ to $\dom(\calT)$).
We denote by $t[\xs \leftarrow \vxs]$, the substitution in $t$ of variables $\xs$ by values $\vxs$ (similarly for $t[\ys \leftarrow \vys]$), and also
 $t[\es \leftarrow \ves]$ and $t[\sss \leftarrow \vsss]$ for Boolean variables (propositions).
A trace $\pi$ is an infinite sequence of valuations in $\dom(\calT)$, which induces an
infinite sequence of Boolean values of the literals occurring in
$\phiT$ and, in turn, an evaluation of $\phiT$ using the semantics of the temporal operators.
For example, given $\psi = \Always(y<x)$ the trace $\pi$
$\{(x:2,y:6),(x:15,y:27)\ldots\}$ induces $\{(\False),(\True)\ldots\}$.
We use $\pi_x$ to denote the projection of $\pi$ to the values of only $x$ (resp. $\pi_y$ for $y$).
A strategy or controller for the system in $\calT$ is now a tuple $\CT: \tupleof{Q,q_0,\delta,o}$
where $Q$ and $q_0\in Q$ are as before, and
$\delta:Q\times \Vals(\xs) \Into Q$ and
$o:Q\times\Vals(\xs)\Into\Vals(\ys)$.




\subsubsection{Boolean Abstraction.}
Boolean abstraction~\cite{RoCe23a} transforms an
\LTLt specification $\phiT$ into an \LTL specification
$\phiB$ in the same temporal fragment (e.g., safety to safety) that preserves realizability, i.e., $\phiT$ and $\phiB$ are \emph{equi-realizable}.
Then, $\phiB$ can be fed into an off-the-shelf synthesis
engine, which generates a controller or a WR for realizable instances.
Boolean abstraction transforms $\phiT$, which contains literals $l_i$, into $\phiB = \phiT[l_i \leftarrow s_i] \wedge \phiEx$, where
$\sss=\{s_i|\text{for each } l_i\}$ is a set of fresh atomic propositions
controlled by the system---such that $s_i$ replaces $l_i$---and where
$\phiEx$ is an additional sub-formula that captures the dependencies
between the $\sss$ variables.
The formula $\phiEx$ also includes additional environment variables
$\es$ that encode that
the environment can leave the system with the power to choose certain
valuations of the variables $\sss$.

We often represent a valuation $\vsss$ of the Boolean variables $\sss$ (which map each variable in $\sss$ to $\True$ or $\False$)
as a \emph{choice} $c$ (an element of $2^{\sss}$), where
$s_i\in c$ means that $\vsss(s_i)=\True$.   
The characteristic formula $f_c(\xs,\ys)$ of a choice $c$ is
\(
  f_c=\bigwedge_{s_i\in c} l_i \And \bigwedge_{s_i\notin c}\neg l_i.
\)
We use $\calC$ for the set of choices, i.e., sets of sets of $\sss$.
A \emph{reaction} $r\subset\calC$ is a set of choices, which
characterizes the possible responses of the system as the result to a
move of the environment.
The \emph{characteristic formula} $f_r(\xs)$ of a reaction $r$ is:
$
  (\bigwedge_{c\in r} \exists\ys. f_c) \And (\bigwedge_{c\notin r} \forall\ys\neg f_c)
$.
We say that $r$ is a valid reaction whenever $\exists \xs.f_r(\xs)$ is valid.
Intuitively, $f_r$ states that for some $\vxs$ by the environment, the system can respond with $\vys$ making the literals in some choice $c\in r$ but
cannot respond with  $\vys$ making the literals in choices
$c\notin r$.
The set \VR of valid reactions partitions precisely the moves of the
environment in terms of the reaction power left to the system.
For each valid reaction $r$ there is a fresh environment variable
$e\in\es$ used in $\phiExtra$ to capture the move of the environment that chooses reaction $r$.
The formula $f_r(\xs)[\xs \leftarrow \vxs]$ is true if a given valuation $\vxs$ of $\xs$ is one move of the environment characterized by $r$.
The formula $\phiEx$ restricts the environment such that
exactly one of the variables in $\es$ is true, which forces that the environment
chooses precisely one valid reaction $r$ when the variable $e$ that corresponds to $r$ is true.

\begin{example} [Running example and abstraction] \label{ex:runningEx}
  In this paper, we address shields for general safety $\LTLt$, but for the sake 
  of simplicity in the presentation, we consider a simpler example.
  Let $\phiT = \square (R_0 \wedge R_1)$ where:
\[
  R_0 : (x<10) \shortrightarrow \Next(y>9) \hspace{3em}
  R_1 : (x \geq 10) \shortrightarrow (y \leq x),\]
where $\xs=\{x\}$ is controlled by the environment and $\ys=\{y\}$ by
the system.
In integer theory $\ThZ$ this specification is realizable (consider the
strategy to always play $y:10$) and the Boolean abstraction first introduces $s_0$ to
abstract $(x<10)$, $s_1$ to abstract $(y>9)$ and $s_2$ to abstract
$(y \leq x)$.
Then %
\( \phiB = \varphi'' \wedge \Always (\phiLegal \Into \phiExtra) \)
where
$\varphi'' = (s_0 \shortrightarrow \Next s_1) \wedge (\neg s_0
\shortrightarrow s_2)$ is a direct abstraction of $\phiT$.
Finally, $\phiExtra$ captures the dependencies between the abstracted
variables:
\newcommand{\NN}{\phantom{\neg}}
\begin{align*}
  \phiExtra:
  \begin{pmatrix}
    \begin{array}{lrcl}
      \phantom{\wedge}& \big(e_0 & \Into & \big( f_{c_1} \vee f_{c_2} \big) \\[0.27em]
      \wedge & \big(e_{0^+} & \Into & \big( f_{c_1} \vee f_{c_2} \vee f_{c_3} \big) \\[0.27em]
      \wedge & \big(\NN e_1 & \Into & \big( f_{c_4} \vee f_{c_5} \vee f_{c_6} \big)
    \end{array}
  \end{pmatrix},
\end{align*}
%
%
where $f_{c_1}=(s_0 \wedge s_1 \wedge \neg s_2)$, $f_{c_2}=(s_0 \wedge \neg s_1 \wedge s_2)$, $f_{c_3}=(s_0 \wedge \neg s_1 \wedge \neg s_2)$, $f_{c_4}=(\neg s_0 \wedge s_1 \wedge s_2)$, $f_{c_5}=(s_0 \wedge s_1 \wedge \neg s_2)$ and $f_{c_6}=(\neg s_0 \wedge \neg s_1 \wedge s_2)$ and where $c_0=\{s_0, s_1, s_2\}$, $c_1=\{s_0,s_1\}$, $c_2=\{s_0,s_2\}$, $c_3=\{s_0\}$, $c_4=\{s_1,s_2\}$, $c_5=\{s_1\}$, $c_6=\{s_1\}$ and $c_7=\emptyset$.
Also, $\es = \{e_0, e_{0^+}, e_1\}$ belong to the environment and represent
$(x<10)$, $(x<9)$ and $(x \geq 10)$, respectively.
$\phiLegal:(e_0 \And\neg e_1 \And\neg e_2)\Or(\neg e_0 \And e_1 \And\neg e_2)\Or(\neg e_0 \And\neg e_1 \And e_2)$
%
%
encodes that $\es$ characterizes a
partition of the (infinite) input valuations of the environment and that only one of the $\es$ is true in every move; e.g., the valuation
$\ves:\tupleof{e_0 : \False,e_0^+ : \True,e_1 : \False}$ of $\es$
corresponds to the choice of the environment where only $e_0^+$ is true (and we use $\ves:e_0^+$ for a shorter notation).
Sub-formulae such as $(\neg s_0 \wedge s_1 \wedge s_2)$ represent the
\textit{choices} of the system (in this case, $c=\{s_1,s_2\}$), that is, given a decision of the
environment (a valuation of $\es$ that makes exactly one variable $e \in \es$ true),
the system can \emph{react} with one of the choices $c$ in the
disjunction implied by $e$.
Note that $f_c=\neg (x<2)\wedge(y>1)\wedge (y\leq x)$.
Also, note that $c$ can be represented as $\vsss:\tupleof{s_0 : \False, s_1 : \True, s_2 : \True}$.
\end{example}



\section{$\LTLt$ Shield Computation}
\label{sec:theoryShields}


\subsubsection{Problem overview.}

In shielding 
,  at every step, the external design $D$
---e.g., a DRL sub-system---
produces an output $\vsss$ from
a given input $\ves$ provided by the environment.
The pair $(\ves,\vsss)$ is passed to the shield $S$ which decides, at every step, whether $\vsss$ proposed by $D$ is safe with respect to some specification $\varphi$. 
The combined system $D \cdot S$ is guaranteed to satisfy $\varphi$.
Using $\LTLt$ we can define richer properties than in propositional $\LTL$ (which has been previously used in shielding).
For instance, consider a classic $D \cdot S$ context in which the $D$ is a robotic navigation platform and $\varphi: \Always(\texttt{LEFT} \shortrightarrow \Next \neg \texttt{LEFT})$.
Then, if $D$ chooses to turn \texttt{LEFT} after a  \texttt{LEFT} action, the shield will consider this second action to be dangerous, and will override it with e.g., \texttt{RIGHT}.
%
%
If \LTLt is used instead of LTL, specifications can be more
sophisticated including, for example, numeric data. 

\noindent
\fbox{
\begin{minipage}{0.45\textwidth}
\emph{
Shielding modulo theories  is the problem of building a shield $\ShT$ from a specification $\phiT$ in which at least one of the input variables $\xs$ or one of the output variables $\ys$ are not Boolean. }
\end{minipage}
}

A shield $\ShT$ conforming to $\phiT$ will evaluate if a pair $(\xs : \vxs,\ys : \vys)$ violates $\phiT$ and propose an overriding output $\ys:\vys'$ if so.
This way, $D \cdot \ShT$ never violates $\phiT$.
Note that it is possible that $\phiT$ is unrealizable, which means that
some environment plays will inevitably lead to violations, and hence $\ShT$ cannot be constructed, because $\text{WR}=\emptyset$.

\begin{example} [Running example as shield] \label{ex:runningSh}
Recall $\phiT$ from Ex.~\ref{ex:runningEx}. 
Also, consider $D$ receives an input trace $\pi_x = \tupleof{15,15,7,5,10}$, and produces the output trace $\pi_y = \tupleof{6,5,13,16,11}$.
We can see $D$ violates $\phiT$ in the fifth step, since $(x<10)$ holds in fourth step (so $v_y$ has to be such that $(y : v_y>9)$ in the fourth step and $(v_y : \leq 10)$ must hold in the fifth step) but $(y : 11 \leq 10)$ is not true.
Instead, a $\ShT$ conforming to $\phiT$ would notice that $(x : 10,y : 11)$ violates $\phiT$ in this fifth step and would override $v_y$ with $v_y'$ to produce $(y' : 10)$, which is the only possible valuation of $y'$ that does not violate $\phiT$.
Note that $\ShT$ did not intervene in the remaining steps. Thus, $D \cdot \ShT$ satisfies $\phiT$ in the example.
    
\end{example}

We propose two different architectures for $\ShT$: one following a deterministic strategy and another one that is non-deterministic.
Both start from $\phiT$ and use Boolean abstraction as the core for the temporal information, but they vary on the way to detect erroneous outputs from $D$ and how to provide corrections.

\subsubsection{Shields as Controllers.} 

The first method leverages $\LTLt$ controller synthesis (see Fig.~\ref{fig:sts}(a)) together with a component to detect errors in $D$.
The process is as follows, starting from a specification $\phiT$:
\begin{enumerate}
\item Boolean abstraction~\cite{RoCe23a} transforms $\phiT$ into an equi-realizable
  \LTL  $\phiB$. 
\item A Boolean controller \CB is synthesized from $\phiB$ (using e.g., \cite{meyer18strix}). \CB receives Boolean inputs
  $\ves$ and provides Boolean outputs $\vsss$.
\item We synthesize a richer controller \CT that receives
  $\vxs$ and produces outputs $\vys'$ in $\dom(\calT)$. Note the apostrophe in $\vys'$, since $\vys$ is the output of $D$.
\item
  $\ShT$ receives the output $\vys$ provided by $D$ and checks if the pair $(\vxs,\vys)$
  violates $\phiT$: if there is a violation, it overrides $\vys$ with $\vys'$; otherwise, it permits $\vys$.
\end{enumerate}

We now elaborate on the last steps.
To construct \CT from \CB we use~\cite{RoCe23b} 
, where \CT is composed of three sub-components:
%
\begin{itemize}
\item A \Partitioner function, which computes Boolean inputs $\ves$ from the richer inputs $\vxs$ via partitioning, checking whether $f_r(\vxs)$ is true. Recall that every $\vxs$ is guaranteed to belong to exactly one reaction $r$.
  \item A \textbf{Boolean controller} \CB, which receives $\ves$ and provides Boolean outputs $\vsss$.
  Also recall that each Boolean variable in $\vsss$ corresponds exactly to a literal in $\phiT$ and that  $f_c$ is the characteristic formula of $c$.
\item A \Provider, which computes an output $\vys'$ from $f_c$ and $r$ such that $\vys'$ is a model
\footnote{See the extended version of this paper~\cite{RoAmCoSaKa24} for a 
for additional details.}
   of the formula
  $\exists \ys'. \big(f_r(\xs) \Into f_c(\ys',\xs)\big)[\xs\leftarrow\vxs]$.
  \end{itemize}

The composition of the \Partitioner, \CB and the \Provider implements \CT (see \cite{RoCe23b,RoCe24}).
For each of these components $g$, let us denote with $g(t_0)=t_1$ the application of $g$ with input $t_0$.
Then, the following holds:

\begin{lemma} \label{lemm:satisfiability}
Let $\phiT$ be an \LTLt specification, $\phiB$ its equi-realizable Boolean abstraction and \CB a controller for  $\phiB$. 
Let $q$ be a state of $\CB$ after processing inputs $\vxs^0\ldots \vxs^n$, and let $\vxs$ be the next input from the environment.
Let $\textup{\Partitioner}(\vxs)=\ves$ be the partition corresponding to $\vxs$, $o(q,\ves)=\vsss$  be the output of \CB and $c$ be the choice associated to $\vsss$.
Then, if $r$ is a valid reaction, the formula $\exists \ys. f_r(\vxs) \Into f_c(\ys,\vxs)$ is satisfiable in $\calT$, where $c \in r$.
\end{lemma}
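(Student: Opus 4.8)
The plan is to reduce the claim to a two-way case analysis on the truth value of the antecedent $f_r(\vxs)$, exploiting that this antecedent becomes ground once $\vxs$ is plugged in. First I would pin down the variable scoping in $\exists \ys. f_r(\vxs) \Into f_c(\ys,\vxs)$: the outer $\exists \ys$ binds only the occurrences of $\ys$ in the consequent $f_c(\ys,\vxs)$, since by definition $f_r$ has only $\xs$ free (its own $\ys$ are already captured by the quantifiers inside $f_r$). Consequently, after the substitution $[\xs \leftarrow \vxs]$ the antecedent $f_r(\vxs)$ is a closed $\calT$-formula, so it evaluates to either true or false in the theory, and I can branch on that value.

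I would then carry out the case split. In the vacuous case, where $f_r(\vxs)$ is false in $\calT$, the implication $f_r(\vxs) \Into f_c(\ys,\vxs)$ holds for every valuation of $\ys$, so choosing any $\vys \in \dom(\calT)$ already witnesses the existential, and the formula is satisfiable. The substantive case is when $f_r(\vxs)$ holds. Here I would unfold the characteristic formula of the reaction, $f_r = (\bigwedge_{c' \in r} \exists \ys. f_{c'}) \And (\bigwedge_{c' \notin r} \forall \ys. \neg f_{c'})$, and read off the positive conjunct indexed by $c$ — available precisely because $c \in r$. Since $f_r(\vxs)$ is true, all its conjuncts are true, in particular $\exists \ys. f_c(\ys,\vxs)$; extracting a witness $\vys$ with $f_c(\vys,\vxs)$ true makes the consequent, hence the whole implication, true for that $\vys$, which witnesses the outer existential.

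The only point that needs genuine care is the variable-scoping bookkeeping: the $\ys$ quantified inside $f_r$'s conjunct $\exists \ys. f_c$ is syntactically a distinct bound copy from the $\ys$ bound by the lemma's outer existential. I would make explicit that both refer to the very same characteristic formula $f_c$, so the model obtained from $f_r$'s positive conjunct can be transported verbatim as the witness for the outer $\exists \ys$. Beyond this, there is essentially no obstacle. In the intended run the \Partitioner assigns $\vxs$ to $r$ exactly because $f_r(\vxs)$ holds, placing us in the substantive case, while the hypotheses that $r$ is a valid reaction and that $c \in r$ are what guarantee $\exists \ys. f_c$ is actually one of the conjuncts of $f_r$; but even without invoking the partitioner, the vacuous case makes the satisfiability statement hold unconditionally. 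This is exactly what ensures that the \Provider's query, a model of $\exists \ys'. \big(f_r(\xs) \Into f_c(\ys',\xs)\big)[\xs\leftarrow\vxs]$, can always be solved.
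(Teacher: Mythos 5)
The paper states this lemma without an explicit proof (it is presented as a consequence of the Boolean abstraction construction of \cite{RoCe23a,RoCe23b,RoCe24}), so there is no written argument to compare against; judged on its own merits, your proof is correct and supplies exactly the intended justification. The two observations doing the work — that $f_r$ has only $\xs$ free, so $f_r(\vxs)$ is a ground sentence admitting a case split, and that $c \in r$ places $\exists \ys. f_c$ among the positive conjuncts of $f_r$, whose witness transfers verbatim to the outer existential in $\exists \ys.\, f_r(\vxs) \Into f_c(\ys,\vxs)$ — are precisely why the \Provider's query is always solvable, and your handling of the vacuous case and of the variable-scoping subtlety is sound.
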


%

In other words, \CT never has to face with moves from \CB that cannot
%
%
%
 %
   be mimicked by \Provider with appropriate values for $\vys$.
   Thus, a shield $\ShT$
  based on \CT (see Fig.~\ref{fig:sts}(a)) first detects whether a valuation pair $(\xs : \vxs, \ys : \vys)$ suggested by $D$ violates $\phiT$.
To do so, at each time-step, \CT obtains $\vsss$ by \CB, with associated $c$, and it checks whether $f_r(\vxs) \Into f_c(\vxs,\vys)$ is valid, that
is, whether the output proposed is equivalent to the move of \CB (in the sense that every literal $l_i$ from $(\vxs,\vys)$ has the same valuation  $\vsss$ given by \CB).
If the formula is valid, then $\vys$ is maintained.
Otherwise, the \Provider is invoked to compute a $\vys'$ that matches the move of \CB.
In both cases, $\ShT$ guarantees that $\vsss$ remains unaltered in \CB and therefore $\phiT$ is guaranteed.
For instance, in the hypothetical simplistic case where $f_r = \True$, and consider that \CB outputs a $\vsss$ with associated $c$ such that $f_c(x,y) = (y>2) \wedge (x<y)$ in $\ThZ$.
Then if the input is $x:5$, a candidate output $y:4$ of $D$ would result in
$f_r(5) \Into f_c(5,4) = (4>2) \wedge (5<4)$, which does not hold. Thus, $y:4$ must be overridden, so $\ShT$ will return a model of
$\exists y'. f_r(x) \Into (y'>2) \wedge (x<y')[x\leftarrow 5]$. One such possibility is $y':6$.
%

%
%

\begin{lemma} 
Let $\phiB$ be an equi-realizable abstraction of $\phiT$, \CB a controller for $\phiB$ and  \CT a corresponding controller for $\phiT$.
Also, consider input $\vxs$ and output $\vys$ by $D$ and $\vys'$ by \Provider of \CT.
If both $f_r(\vxs) \Into f_c(\vxs,\vys)$ and $f_r(\vxs) \Into f_c(\vxs,\vys')$ hold, then $\vys$ and $\vys'$ correspond to the same move of \CB.
\end{lemma}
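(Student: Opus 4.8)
The plan is to reduce the claim to the single fact that the characteristic formula $f_c$ pins down the truth value of every literal $l_i$, and hence the entire Boolean valuation $\vsss$, uniquely. First I would observe that the \Partitioner computes $\ves$ precisely because the incoming input $\vxs$ belongs to exactly one valid reaction $r$ (the invariant recalled in the description of the \Partitioner and in the construction of \VR). Consequently $f_r(\vxs)$ evaluates to true. Applying modus ponens to each of the two hypotheses, $f_r(\vxs) \Into f_c(\vxs,\vys)$ and $f_r(\vxs) \Into f_c(\vxs,\vys')$, then yields that both $f_c(\vxs,\vys)$ and $f_c(\vxs,\vys')$ hold outright.

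Next I would unfold the definition $f_c = \bigwedge_{s_i\in c} l_i \And \bigwedge_{s_i\notin c}\Not l_i$. Since $f_c(\vxs,\vys)$ holds, the valuation $(\vxs,\vys)$ makes exactly those literals $l_i$ with $s_i\in c$ true and all remaining literals false; the identical conclusion follows for $(\vxs,\vys')$ from $f_c(\vxs,\vys')$. Because each Boolean variable $s_i\in\sss$ corresponds by construction of the abstraction to the literal $l_i$, both pairs therefore induce one and the same Boolean valuation $\vsss$ of $\sss$, namely the one with $\vsss(s_i)=\True$ iff $s_i\in c$. This $\vsss$ is exactly the output that \CB associates with the choice $c$, i.e., the move of \CB in question. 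Hence $\vys$ and $\vys'$ correspond to the same move of \CB, as claimed.

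The only delicate point is the justification that $f_r(\vxs)$ holds, which is where I would lean on the partition invariant that every environment input lies in exactly one valid reaction; once that is in hand, everything reduces to an unfolding of $f_c$ together with the bijective correspondence between the variables $s_i$ and the literals $l_i$. I do not anticipate a genuine obstacle here: the substantive content is entirely in stating precisely what ``the same move of \CB'' means at the Boolean-abstraction level (agreement of the induced literal valuation $\vsss$) and checking that $f_c$ determines that valuation, rather than in any nontrivial reasoning over the theory $\calT$ itself. Note in particular that the lemma does \emph{not} assert $\vys=\vys'$ in $\dom(\calT)$; it only asserts their indistinguishability after abstraction, which is exactly what the uniqueness of the $f_c$-induced valuation delivers.
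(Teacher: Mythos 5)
Your proof is correct and matches the paper's intended justification: the paper states this lemma without an explicit proof, offering only the gloss that the conclusion means $\vys$ and $\vys'$ ``make the same literals true,'' and your argument is precisely the formalization of that remark. In particular, you correctly identify the one point needing care---discharging the antecedent $f_r(\vxs)$ via the fact that \VR partitions the environment's moves, so the relevant $r$ is the unique valid reaction with $f_r(\vxs)$ true (rather than the implications holding vacuously)---after which unfolding $f_c$ shows both pairs induce the same valuation $\vsss$, i.e., the same move of \CB.
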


\noindent In other words, if $\vys$ and $\vys'$ make the same literals true, a shield $\ShT$ based on $\CT$ will not override $\vys$; whereas, otherwise, it will output $\vys'$.

\begin{theorem} [Correctness of $D \cdot \ShT$ based on \CT] \label{thm:shieldsExist}
Let $D$ be an external controller and $\ShT$ be a shield constructed from a synthetized \CT from a specification $\phiT$. Then, $D \cdot \ShT$  is also a controller for $\phiT$.
\end{theorem}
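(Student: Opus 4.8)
The plan is to show that \emph{every} infinite play of the combined system $D \cdot \ShT$ induces a Boolean trace that is, step by step, a legal play of the winning controller $\CB$, and then to pull $\CB$'s guarantee for $\phiB$ back to $\phiT$ through the soundness of Boolean abstraction. Concretely, a play of $D \cdot \ShT$ is determined by an environment input sequence $\vxs^0,\vxs^1,\dots$; at each step $D$ proposes an output, the shield either accepts it or replaces it by the \Provider's $\vys'$, and I want the resulting output sequence $\pi_y$ to realize exactly the choices that $\CB$ emits.

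First I would set up an induction on the length of the processed prefix, carrying the invariant that after reading $\vxs^0\dots\vxs^n$ the controller $\CB$ sits in the state $q^{n+1}$ it would reach on the corresponding Boolean inputs, and that the literal valuations induced by the concrete outputs so far coincide with $\CB$'s emitted choices $c^0,\dots,c^n$. In the inductive step, the \Partitioner maps $\vxs^n$ to the unique valid reaction $r^n$ (recall every input belongs to exactly one reaction), which fixes a legal Boolean input $\ves^n$ satisfying $\phiLegal$; then $\CB$ outputs $\vsss^n$ with associated choice $c^n$. The shield enforces $f_{r^n}(\vxs^n) \Into f_{c^n}(\vxs^n,\cdot)$ on the actual output: Lemma~\ref{lemm:satisfiability} guarantees that such a witnessing output always exists (so the \Provider never gets stuck when $D$'s output is rejected), and the subsequent lemma guarantees that any accepted output induces precisely the literal valuation $c^n$. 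Hence the invariant is preserved.

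Taking the limit, the Boolean trace $(\ves^0,\vsss^0),(\ves^1,\vsss^1),\dots$ induced by the play is a full play of $\CB$ over legal environment inputs. Since $\CB$ is winning for $\phiB = \phiT[l_i\leftarrow s_i] \wedge \phiEx$, this trace satisfies $\phiB$, and in particular its projection satisfies the direct abstraction $\phiT[l_i\leftarrow s_i]$. Because the literal valuations of the concrete play $(\pi_x,\pi_y)$ agree with this Boolean trace at every step, the soundness of Boolean abstraction lifts satisfaction of $\phiT[l_i\leftarrow s_i]$ to satisfaction of $\phiT$ by $(\pi_x,\pi_y)$. As the play was arbitrary, every play of $D \cdot \ShT$ satisfies $\phiT$, i.e., $D \cdot \ShT$ is a controller for $\phiT$.

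I expect the main obstacle to be the transfer step: rigorously aligning the \Partitioner's reaction-based encoding of $\ves$ with the environment variables used in $\phiEx$, and invoking the exactness of the literal-to-Boolean correspondence to justify that a $\phiB$-winning Boolean trace forces $\phiT$ on the concrete trace. The safety assumption on $\phiT$ is what makes the stepwise check sound, since any violation would already manifest on a finite prefix that the shield inspects; a secondary point to handle carefully is the case where $D$'s proposed $\vys$ happens to satisfy the check but differs syntactically from the \Provider's value, which is exactly what the second lemma resolves.
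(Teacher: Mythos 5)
Your proof is correct and takes essentially the same approach as the paper: the paper's own (terse) argument, distributed around its two lemmas, is precisely that the accept-or-override policy keeps \CB's Boolean moves unaltered---Lemma~\ref{lemm:satisfiability} guaranteeing the \Provider never gets stuck and the second lemma guaranteeing accepted outputs realize the same move---so every play of $D \cdot \ShT$ projects to a winning play of \CB for \phiB, and satisfaction lifts back to \phiT through the literal-to-proposition correspondence, exactly as in your induction. One minor quibble: your closing remark that safety of \phiT is what makes the stepwise check sound is not needed---the \CT-based shield checks conformance with \CB's move rather than violation of \phiT directly, so the argument works for arbitrary \phiT (the safety assumption is only required for the winning-region construction of Thm.~\ref{thm:shieldsWR}).
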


In other words, given $\phiT$, if $\calT$ is decidable in the $\exists^*\forall^*$-fragment (a condition for the Boolean abstraction~\cite{RoCe23a}), then, leveraging \CT, a shield $\ShT$ conforming to $\phiT$ can be computed for every external $D$ whose input-output domain is $\calT$.


\subsubsection{More Flexible Shields from Winning Regions.}

\begin{figure*}[t!]
\centering
\hspace{0.5em}\begin{tabular}{c@{\hspace{0.5em}}c}
  \includegraphics[scale=0.46]{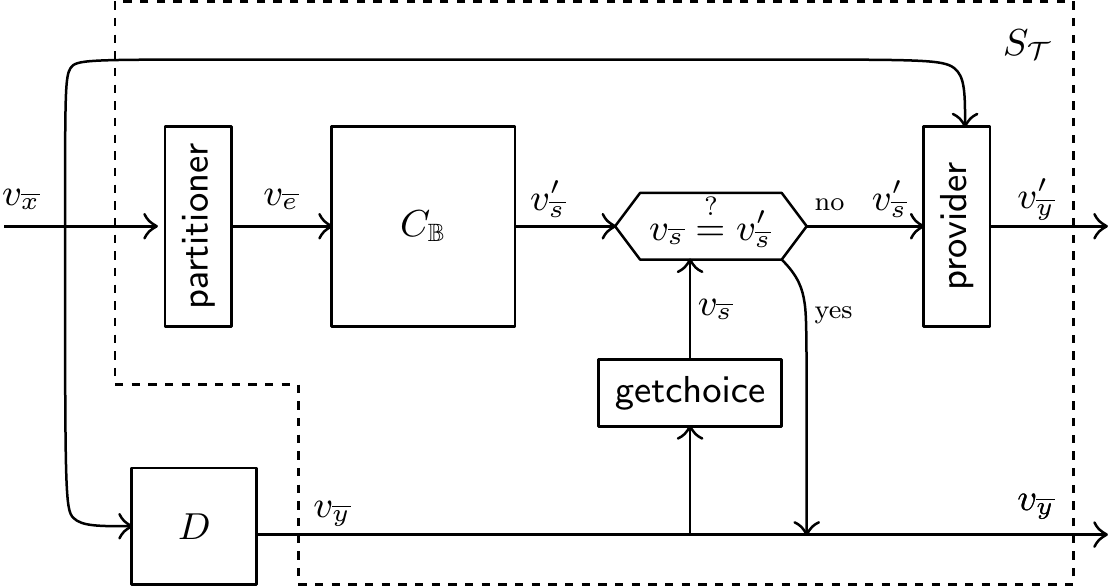} &
  \includegraphics[scale=0.46]{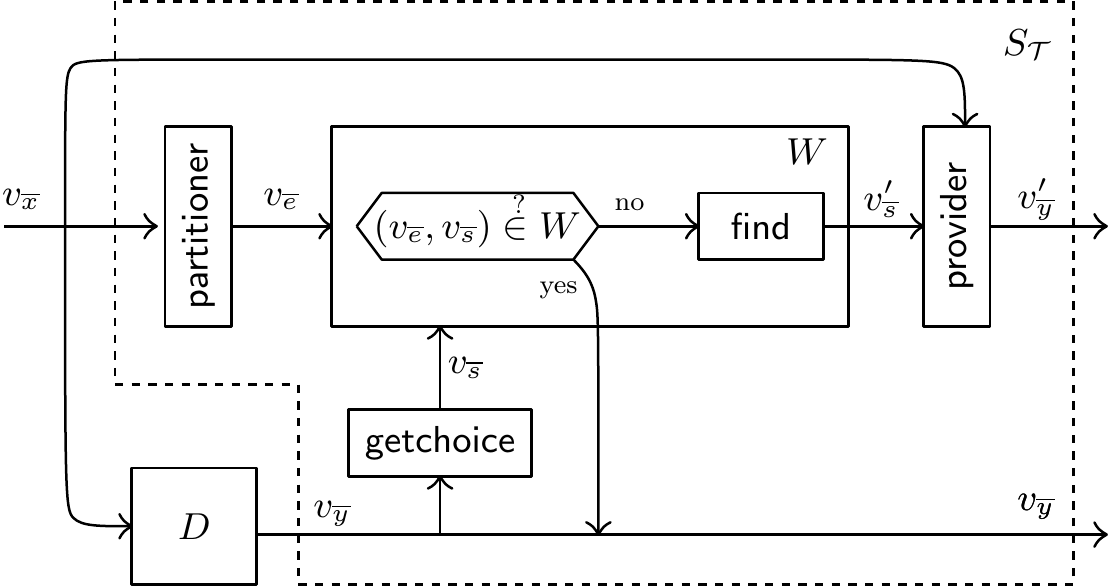} \\
  (a) Using \CT as $\ShT$. & (b) Using \WB as $\ShT$. 
\end{tabular}
\caption{Architectures for shields modulo theory.}
\label{fig:sts}
\end{figure*}


$\ShT$ based on \CT are sometimes too intrusive in the sense that they can cause unnecessary corrections, because some $(\vxs,\vys)$ may be labelled as a violation when they are not, only because it satisfies a different collection of literals than the ones chosen by the internal controller \CB.
Given input $\vxs$, the output $\vsss$ provided by \CB and the output $\vys$ suggest by $D$, $(\vxs,\vys)$ is considered to be a \emph{dangerous output}  (i.e., a potential violation of $\phiT$) whenever a $f_r(\vxs)$ holds but the $f_c(\vxs,\vys)$ of the $c$ associated to $\CB(\Partitioner(\vxs)) = \vsss$ does not hold.
Thus, we study now permissive shields $\ShT$ with more precise characterizations of unsafe actions.

%

The \emph{winning region} (\WR) 
is the most general characterization of the winning states and moves for a safety \LTL specification.
%
%
%
This second construction can be applied to any subset of \WR as long as every state has at least one successor for every input (for example, using the combination of a collection of controllers).
The process of computing $\ShT$ based on a given winning region $\WB:\tupleof{Q,I,T}$ is as follows (see Fig.~\ref{fig:sts}(b)) and maintains at each instant a set of current states $\Qnow\subseteq Q$ of \WB, starting from the initial set of states $I$.

\begin{enumerate}
    \item Compute $\phiB$ from $\phiT$ using abstraction, as before.
    \item We compute the winning region $\WB$ from $\phiB$ (e.g., using \cite{BrePeFraSan14}). 
    %
    \item The \Partitioner computes $\ves$ from the input values $\vxs$; and, given $\vys$ provided by $D$, we produce the unique $\vsss$ from $(\vxs,\vys)$
        or equivalently $c=\{s_i | l_i(\vxs,\vys) \text{ holds}\}$. This is computed by component \getchoice in Fig.~\ref{fig:sts}(b).
    \item From some of the current states $q\in \Qnow$ maintained, we decide whether $\delta(q,\ves)$ has some successor $(q',\vsss)$ or not. 
    If it does, output $\vys$.
     If it does not, choose a successor output $\vsss'$ such that there is $(q',\vsss')\in \delta(q,\ves)$ for some $q\in \Qnow$, and use
    a \Provider to produce an output $\vys'$ (this is implemented by $\find$ in Fig.~\ref{fig:sts}(b)).
    In either case, use appropriate valuation $\overline{v}$ of $\sss$ (i.e., $\vsss$ or $\vsss'$) to update $\Qnow$ to \(\{q'|(q',\overline{v})\in\delta(q,\ves) \text{ for some $q\in \Qnow$}\}\).
\end{enumerate}

Note that if the environment generates an input $\vxs$ (with corresponding $\ves$) and $D$ responds with an output $\vys$ 
such that $\getchoice(\vys)=\vsss$, 
if $\WB$ permits $(\ves,\vsss)$ then  the
suggested $\vys$  remains.
Otherwise, we use a component \find which computes an alternative $\vsss'$ such that $(\ves,\vsss')$ is in $\WB$ and a proper value $\vys$ that corresponds to $\vsss$ is output.
%
%
%
%
%
%
%
Also, note that $\find$ can always identify a value $\vsss'$ from any $q\in \Qnow$ such that there is a successor $(q',\vsss')\in \delta(q,\ves)$ by the definition of \WR.
Given two $\vsss$ and $\vsss'$ such that $(\ves,\vsss)\in \WB$ and $(\ves,\vsss')\in \WB$, the criteria (e.g., \emph{choice-logic} \cite{BeMaWo21}) to choose one is not relevant for this paper and we leave it for future work.
Finally, given $\vxs$, the \WB cannot distinguish two values $\vys$ and $\vys'$ if $\getchoice(\vxs,\vys)=\getchoice(\vxs,\vys')$.





\begin{theorem} [Correctness of $D \cdot \ShT$ based on WR] \label{thm:shieldsWR}
Let $\phiT$ be a safety specification and $\phiB$ an abstraction of $\phiT$.
Let  $\ShT$ be synthesized using the winning region $\WB$ of $\phiB$.
Then, $D \cdot \ShT$ satisfies $\phiT$, for any external controller $D$.
\end{theorem}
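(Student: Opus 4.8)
The plan is to build a step-by-step correspondence between the concrete traces of $D\cdot\ShT$ over $\calT$ and the Boolean plays tracked inside $\WB$, and then transfer satisfaction of $\phiB$ in the abstract world back to satisfaction of $\phiT$ in the concrete one. Fix an arbitrary $D$ and an arbitrary environment, and let $\pi=\langle(\vxs^0,\vys^0),(\vxs^1,\vys^1),\ldots\rangle$ be the resulting trace of $D\cdot\ShT$, where each $\vys^n$ is the value actually emitted (either the one proposed by $D$ or the override $\vys'$ produced by the \Provider). Put $\ves^n=\Partitioner(\vxs^n)$ and $\vsss^n=\getchoice(\vxs^n,\vys^n)$, and let $\beta=\langle(\ves^0,\vsss^0),(\ves^1,\vsss^1),\ldots\rangle$ be the induced Boolean play.

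First I would prove, by induction on $n$, the invariant that after $n$ steps $\Qnow$ is exactly the nonempty set of states of $\WB$ reachable from $I$ along $\langle(\ves^0,\vsss^0),\ldots,(\ves^{n-1},\vsss^{n-1})\rangle$, and that this prefix stays inside $\WB$, i.e. each emitted move is a legal transition $T$ from some maintained state. The base case is $\Qnow=I\neq\emptyset$, which holds because $\phiT$ is realizable and Boolean abstraction preserves realizability, so $\WB$ is nonempty. For the inductive step I would use the case split of the construction: if some $q\in\Qnow$ admits a successor $(q',\vsss^n)\in T(q,\ves^n)$, the move is kept and $\Qnow$ is updated to those successors; otherwise \find selects an alternative $\vsss'$ that does admit a successor, which by the very definition of a winning region always exists (every state of $\WB$ has a successor for every input), and the \Provider realizes it as $\vys'$. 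Here I must check that the chosen $\vsss'$ is genuinely realizable by some $\vys'$ with $\getchoice(\vxs^n,\vys')=\vsss'$: this holds because every $T$-transition on input $\ves^n$ (which encodes the reaction $r$ into which $\vxs^n$ falls) exposes only choices $c\in r$, and by the definition of a reaction each such $c$ satisfies $\exists\ys.\,f_c(\vxs^n,\ys)$.

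With the invariant in hand, $\beta$ stays inside $\WB$ forever. Since $\WB$ is the winning region of the safety specification $\phiB$ and is the most general characterization of its winning moves, every infinite play remaining within it satisfies $\phiB$; hence $\beta\models\phiB$, and in particular $\beta\models\phiT[l_i\leftarrow s_i]$ because $\phiB=\phiT[l_i\leftarrow s_i]\wedge\phiEx$. I would then transfer this to $\pi$: because $\vsss^n=\getchoice(\vxs^n,\vys^n)$ sets each $s_i$ to the truth value of the literal $l_i$ under the concrete valuation $(\vxs^n,\vys^n)$, the atoms of $\beta$ agree step-by-step with the literals of $\pi$. Since the Boolean connectives and temporal operators are interpreted identically in both semantics, the evaluation of $\phiT[l_i\leftarrow s_i]$ on $\beta$ coincides with that of $\phiT$ on $\pi$. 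Thus $\pi\models\phiT$, and as $\pi$ was arbitrary, $D\cdot\ShT$ satisfies $\phiT$.

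I expect the main obstacle to be this final transfer together with the consistency of the environment side. The Boolean abstraction result of~\cite{RoCe23a} is phrased as equi-realizability, a statement about the existence of strategies, whereas here I need the finer, per-trace soundness that a concrete trace and its induced Boolean play satisfy $\phiT$ and $\phiT[l_i\leftarrow s_i]$ simultaneously, and that the \Partitioner's reaction choice and the \Provider's realization of the overriding $\vsss'$ are mutually consistent on the same $\vxs^n$. Pinning this down cleanly requires the reaction/choice machinery — specifically that $\vxs^n$ belongs to exactly one valid reaction $r$, that the maintained $\WB$-transitions for $\ves^n$ only expose choices of $r$, and that each such choice is concretely realizable — which is essentially the analogue, in the winning-region setting, of Lemma~\ref{lemm:satisfiability}.
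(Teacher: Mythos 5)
Your proof is correct and takes essentially the same approach as the paper's: the paper argues, in a single step, that the literal valuations of any trace of $D\cdot\ShT$ are guaranteed to form a path in $\WB$, hence satisfy $\phiB$, and hence that $\pi\models\phiT$. You simply make explicit what the paper leaves implicit --- the induction maintaining $\Qnow$, the existence of successors by the definition of a winning region, and the concrete realizability of the overriding $\vsss'$ (the WR analogue of Lemma~\ref{lemm:satisfiability}) --- so the two arguments coincide in substance.
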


Thm.~\ref{thm:shieldsWR} holds because the valuations of the literals of any trace $\pi$ of $D\cdot \ShT$ is guaranteed to be a path in $\WB$ and therefore will satisfy $\phiB$. Hence, $\pi\models\phiT$.
The main practical difference between using a controller \CB and a winning region $W$, is expressed by the following lemma:

\begin{lemma}  \label{lemm:intervention}
Let $C_1$ be controller obtained using a controller \CB for $\phiB$ and $C_2$ obtained from the WR of $\phiB$. 
Let $D$ be an external controller.
Let $\pi$ be a trace obtained by $D\cdot C_1$ for some input $\pi_{\xs}$.
Then, $\pi$ is a trace of $(D\cdot C_1)\cdot C_2$ for $\pi_{\xs}$.
\end{lemma}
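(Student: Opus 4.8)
The plan is to show that the winning-region shield $C_2$ is completely transparent on any trace produced by $D\cdot C_1$, i.e. $(D\cdot C_1)\cdot C_2$ reproduces $\pi$ verbatim. The conceptual reason is that $C_1$ is built from a single winning strategy $\CB$, whereas $C_2$ is built from the winning region $\WB$ of the \emph{same} $\phiB$; since for a safety specification every winning strategy embeds into $\WB$ (as noted in Sec.~\ref{sec:prelim}), every move that $C_1$ ever forces already lies inside $\WB$, so $C_2$ has no reason to override it. Crucially, both shields start from the same $\phiB$ and share the same \Partitioner and the same abstraction variables, so the Boolean picture ($\ves$ and the choice $c$) computed at each step agrees on both sides.

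First I would fix the input $\pi_{\xs}$ and recall the invariant of $C_1$ established in the \CT-construction and in Lemma~\ref{lemm:satisfiability}: at every step $C_1$ guarantees $f_c(\vxs,\vys)$ for the choice $c$ associated with $\CB$'s output, regardless of whether $D$'s suggestion was kept or replaced by the \Provider. Since $f_c$ pins down exactly which literals hold under $(\vxs,\vys)$, this means $\getchoice(\vxs,\vys)=\vsss=o(q,\ves)$ at every step. Hence the sequence of pairs $(\ves,\vsss)$ induced by $\pi$ is precisely the play of $\CB$ on the abstracted input, which in particular satisfies $\phiB$.

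The heart of the argument is an induction on the length of the prefix, maintaining the invariant that the image under the embedding of the current state of $\CB$ lies in the set $\Qnow$ tracked by $C_2$. For the base case, $\CB$ starts in $q_0$ whose image lies in $I=\Qnow$. For the inductive step, let $q$ be $\CB$'s current state with image $\hat q\in\Qnow$; when the next abstract input $\ves$ and choice $\vsss=o(q,\ves)$ arrive, the embedding sends the $\CB$-transition to a successor $(q',\vsss)\in T(\hat q,\ves)$ of $\WB$ with $q'=\delta(q,\ves)$. Hence in step~4 of the WR-construction $C_2$ finds a successor labelled with exactly this $\vsss$, keeps the output $\vys$, and the $\Qnow$-update rule places the image of $q'$ into the new $\Qnow$, preserving the invariant. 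Consequently $C_2$ never invokes \find, the full theory-valued output at every step is unchanged, and $\pi$ is a trace of $(D\cdot C_1)\cdot C_2$.

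I expect the main obstacle to be the bookkeeping of the set-valued state $\Qnow$ against the single-state $\CB$: because $\WB$ is a nondeterministic transition relation while $\CB$ is deterministic, I must make the embedding of $\CB$ into $\WB$ precise and check that it commutes with the shared \Partitioner (so that $\ves$ and $c$ genuinely agree on both sides) and that the $\Qnow$-update in step~4 never drops the image of $\CB$'s unique successor. Once that embedding is pinned down, the ``$C_2$ keeps $\vys$'' conclusion is immediate from step~4 of the construction.
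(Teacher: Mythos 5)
Your proof is correct, and it is precisely the argument the paper implicitly relies on: the paper states this lemma without an explicit proof (only glossing it as saying that a WR-based controller is less intrusive than one based on a specific Boolean controller \CB), and your induction combines exactly the two facts the paper's text supplies --- that $D\cdot C_1$ realizes \CB's Boolean move $(\ves,\vsss)$ at every step (since the \CT-based shield always enforces $f_c$ for the choice $c$ output by \CB), and that every winning strategy for a safety specification embeds into the winning region --- to conclude that $C_2$ always finds $(\ves,\vsss)$ enabled from some state in \Qnow, never invokes \find, and therefore passes $\pi$ through unchanged. The only point to tighten is notational: in your inductive step $\delta(q,\ves)$ is a state of \CB, so the WR successor should be written as its embedding image $\hat q'$ of $\delta(q,\ves)$, with $(\hat q',\vsss)\in T(\hat q,\ves)$; with that fixed, the invariant propagation and the transparency of $C_2$ go through verbatim.
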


Lemma~\ref{lemm:intervention} essentially states that a WR controller is less intrusive than one based on a specific Boolean controller \CB.
%
%
However, in practice,  a WR is harder to interpret and slower to compute than \CB.
%
%
%
Note that we presented $\ShT$ based on \CT and $\ShT$ based on WR, but for classic \LTL shielding only the latter case makes sense.
This is because in classic \LTL using shield $S$ based on \CB is equivalent to forcing $D$ to coincide with $S$ on all time-steps, so $S$ could be directly used ignoring $D$.
However, in \LTLt there are potentially many different $\vys$ by $D$ that correspond to a
given Boolean $\vsss'$ by \CB, so even when this architecture forces to mimick $\CB$ in the Boolean domain, it is reasonable to use an external source $D$ of rich value output candidates.
Thus, not only we present an extension of shielding to $\LTLt$ (with the WR approach), but also this is the first work that is capable to synthetise shields leveraging all the power of classic synthesis (with the $\CT$ approach): e.g., we can use bounded synthesis \cite{SchFink07} instead of computing a WR.

\section{Permissive and Intrusive Shields} \label{subsec:optimality}

\subsubsection{Minimal and Feasible Reactions.}

Computing a Boolean Abstraction boils down to calculating the set \VR of valid reactions.
In order to speed up the calculation, optimizations are sometimes used (see \cite{RoCe23a}).
For instance, Ex.~\ref{ex:runningEx} shows a formula $\phiT$ that can be Booleanized into a $\phiB$ that contains three environment variables ($e_0$, $e_0^+$ and $e_1$) that correspond to the three valid reactions.
A closer inspection reveals that the valid reaction associated to $e_0^+$ (which represents $(x<1)$) is  ``subsumed'' by the reaction associated to $e_0$ (which represents $(x<2)$), which is witnessed by $e_0$ leaving to the system a strictly smaller set of choices than $e_0^+$.
%
%
In the game theoretic sense, an environment that plays $e_0^+$ is leaving the system strictly more options, so a resulting Booleanizaed formula $\phiB'$ that simply ignores $e_0^+$ only limits the environment from playing sub-optimal moves, and $\phiB'$ is also equi-realizable to $\phiT$.
From the equi-realizability point of view,  one can neglect valid reactions that provide strictly more choices than other valid reactions.
However, even though using $\phiB'$ does not compromise the correctness of the resulting $\ShT$ (using either method described before), the resulting $\ShT$  will again be more intrusive than if one uses $\phiB$, as some precision is lost (in particular for inputs $(x<1)$, which are treated  by $\phiB'$ as $(x<2)$).
We now rigorously formalize this intuition.


A reaction $r$ is \emph{above} another reaction $r'$ whenever $r'\subseteq r$.
%
%
Note that two reactions $r$ and $r'$ can be not comparable,
since neither contains the same or a strictly larger set of playable choices
than the other.
%
%
We now define two sets of reactions that are smaller than
$\VR$ but still guarantee that a $\phiB$
obtained from $\phiT$ by Boolean abstraction is
equi-realizable.
Recall that the set of valid reactions is $\VR=\{r|\exists \xs.f_r(\xs) \text{ is valid}\}$.

\begin{definition} [\MVR and Feasible]
 The set  of \emph{minimal valid reactions} is $\MVR = \{r \in \VR \;| \textit{ there is no } r' \in \VR
  \textit{ such that } r'\subseteq r\}$.
A set of reactions $R$ is a \textit{feasible} whenever
$\MVR \subseteq R \subseteq \VR$.
\end{definition}

That is, a set of reactions $R$ is feasible if all the reactions in
$R$ are valid and it contains at least all minimal reactions.
In order to see whether a set of reactions $R$ is \textit{below} $\VR$ or is indeed $\VR$, we need to check two properties.

\begin{definition}[Legitimacy and Strict Covering]
  \label{def:legitimacy}
  Let $R$ be a set of reactions:
  \begin{itemize}
  \item $R$ is \emph{legitimate} iff for all
  $r\in R$, $\exists \xs \suchThat f_r(\xs)$ is valid. 
  \item $R$ is a \emph{strict covering} iff $\forall\xs.\bigvee_{r\in R}f_r(\xs)$ is valid.
  \end{itemize}
\end{definition}


\noindent If $R$ is legitimate, then $R \subseteq \VR$.  If additionally $R$ is a strict covering, then $R = \VR$.
Strict covering implies that all possible moves of the environment are
covered, regardless of whether there are moves that a clever
environment will never play because these moves leave more power to
the system than better, alternative moves.
A \textit{non-strict} covering does not necessarily consider  all the
possible moves of the environment in the game, but it still considers at least all optimal environment moves.
%
%
A non-strict covering can be evaluated by checking that with regard to $R$, for all
$\overline{x}$, the disjunction of the \textbf{playable} choices (see $\bigwedge_{c\in r} \exists\ys. f_c$ of $f_r$ in Preliminaries) of its
reactions holds.

\begin{definition}[Covering]
  \label{def:nonStrict}
  
    The playable formula for a  reaction $r$ is defined as
$
  f^P_r(\xs) \DefinedAs \bigwedge_{c \in r} \exists \ys. f_c(\xs,\ys). 
$
  A set of reactions $R$ is covering if and only if 
  $\varphi_{cov}(R) = \forall \xs \suchThat \bigvee_{r \in
    R}f^P_r(\xs)$ is valid.
\end{definition}

Note that a playable formula removes from the characteristic formula $f_r$ the sub-formula
$(\bigwedge_{c\notin r} \forall\ys\neg f_c)$ that captures that the choices not in $r$ cannot be achieved by any $\vys$.
We can easily check whether a set of reactions $R$ is feasible, as follows.
First, $R$ must be legitimate. 
Second, if $\varphi_{cov}(R)$ is valid, then $R$ contains a subset
of valid reactions that makes $R$ covering in the
sense that it considers all the ``clever'' moves for the environment (considering that a move that leaves less playable choices to the system is more clever for the environment).
%

%

\begin{theorem} 
  \label{thm:equirealizabilityFVR}
  Let $R$ be a feasible set of reactions and let $\phiB = \phiT[l_i \leftarrow s_i] \wedge \phiEx(R)$ be the Booleanization of $\phiT$ using $R$.
  Then, $\phiT$ and $\phiB$ are equi-realizable.  
\end{theorem}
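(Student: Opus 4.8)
The plan is to prove equi-realizability by transferring winning strategies between the theory game $\GameT$ for $\phiT$ and the Boolean game $\GameB$ induced by $\phiB=\phiT[l_i\leftarrow s_i]\wedge\phiEx(R)$, exploiting that feasibility pins $R$ between $\MVR$ and $\VR$. Recall that in $\GameB$ the environment selects (through $\phiLegal$) exactly one reaction $r\in R$, the system answers with a choice $c$ constrained by $\phiEx(R)$ so that $c\in r$, and it wins iff the induced Boolean trace satisfies $\phiT[l_i\leftarrow s_i]$; in $\GameT$ the players exchange concrete valuations $\vxs,\vys\in\dom(\calT)$ and the system wins iff the trace satisfies $\phiT$. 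I would prove the two implications separately, using throughout that $\VR$ is finite (reactions are subsets of the finite set $\calC$), so $(\VR,\subseteq)$ has minimal elements and every $r\in\VR$ is above some $r_{\min}\in\MVR$.

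First direction: if the system wins $\GameT$, then it wins $\GameB$. Fix a winning $\calT$-strategy $\sigma$. Since $R\subseteq\VR$ (legitimacy), every $r\in R$ is realized by at least one concrete input, so pick a representative $\vxs_r$ with $f_r(\vxs_r)$ true. Simulate a shadow $\calT$-play in which, whenever the environment plays the variable for $r$ in $\GameB$, the shadow environment plays $\vxs_r$ and $\sigma$ answers with some $\vys$; let $c$ be the choice induced by $(\vxs_r,\vys)$. Because $\vxs_r$ realizes $r$, its achievable choices are exactly $r$, so $c\in r$ and answering $c$ respects $\phiEx(R)$. The Boolean trace so produced is the literal abstraction of a genuine $\calT$-play satisfying $\phiT$, hence it satisfies $\phiT[l_i\leftarrow s_i]$; thus the simulated strategy wins $\GameB$.

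Second direction (the main obstacle): if the system wins $\GameB$, then it wins $\GameT$. Fix a winning Boolean strategy $\tau$. Given a concrete input $\vxs$, let $r(\vxs)\in\VR$ be the unique reaction it realizes (the reactions in $\VR$ partition the environment moves). Define a dominance map into $R$: if $r(\vxs)\in R$ keep it; otherwise choose $r_{\min}\in\MVR\subseteq R$ with $r_{\min}\subseteq r(\vxs)$, which exists by the poset fact above. Feed the corresponding environment move to $\tau$, obtaining a choice $c$ with $c\in r_{\min}\subseteq r(\vxs)$ (or $c\in r(\vxs)$ in the first case). In either case $c\in r(\vxs)$, so by the definition of a reaction the concrete $\vxs$ admits a response $\vys$ with $f_c(\vxs,\vys)$ true; play this $\vys$. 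The literal valuations along the resulting $\calT$-play coincide exactly with the Boolean play $\tau$ produces against a legitimate sequence of environment moves, so the induced trace satisfies $\phiT[l_i\leftarrow s_i]$ and hence $\phiT$.

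The crux is this second direction, and it is precisely where feasibility is used: I expect the main difficulty to be arguing both that a dominating minimal reaction always lies in $R$ and that the system can still realize the chosen choice at the actual input $\vxs$. Both hinge on $\MVR\subseteq R$: dropping a minimal reaction $r_{\min}$ would leave inputs realizing $r_{\min}$ with no reaction of $R$ below them (nothing is strictly below a minimal element), so $\tau$ would have been evaluated only against strictly more permissive environment moves and its transfer to $\GameT$ could fail. The achievability of $c$ at $\vxs$ follows from $c\in r(\vxs)$ together with the characteristic formula $f_c=\bigwedge_{s_i\in c}l_i\wedge\bigwedge_{s_i\notin c}\neg l_i$ pinning the response to exactly the literals of $c$. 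Finally, I would note that one may alternatively invoke the known equi-realizability of $\phiT$ with the full abstraction $\phiEx(\VR)$ and only compare $R$ with $\VR$: the same two transfers then reduce to the monotonicity of the game in the environment's reaction set.
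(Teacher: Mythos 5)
Your proof is correct, but it takes a genuinely different route from the paper's. The paper never leaves the Boolean domain: it takes the equi-realizability of $\phiT$ and the abstraction built from the full set $\VR$ as given (from the prior Boolean-abstraction result), and then argues only that the abstraction over a feasible $R$ is equi-realizable to the one over $\VR$, via a dominance argument on environment moves --- any reaction $r\in\VR\setminus R$ lies above some $r'\in R$, and an environment that could win by playing $r$ can win by playing $r'$ instead, since the system's admissible choices under $r'$ are a subset of those under $r$ and the continuation of the game depends only on the chosen $\sss$-valuation; hence dominated moves can be pruned without changing the winner. You instead re-derive equi-realizability from scratch by a two-way strategy transfer between $\GameT$ and the Boolean game over $R$: the dominance insight reappears in your second direction as the map sending a concrete input $\vxs$ with reaction $r(\vxs)\notin R$ down to some $r_{\min}\in\MVR\subseteq R$ with $r_{\min}\subseteq r(\vxs)$, together with the observation that any choice $c\in r_{\min}$ remains achievable at $\vxs$. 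What the paper's factorization buys is brevity and modularity (it reuses the abstraction-correctness theorem as a black box); what yours buys is self-containment and precision about where each half of feasibility is used --- legitimacy ($R\subseteq\VR$) in the $\GameT\Rightarrow\GameB$ direction, $\MVR\subseteq R$ in the converse --- and your converse construction is essentially the \Partitioner/\Provider execution pipeline the paper uses operationally elsewhere, which makes the transfer concrete. Your closing remark (invoke the known equi-realizability with $\phiEx(\VR)$ and compare only $R$ against $\VR$) is exactly the paper's proof. Two details worth tightening in your version, both easily patched: the simulated Boolean strategy in the first direction must also be defined against environment moves violating the legality constraint (there the $\phiEx$ implication is vacuous, and the trace produced by continuing the shadow play still satisfies $\phiT[l_i\leftarrow s_i]$), and the dominance map should fix one $r_{\min}$ per reaction so the constructed strategies are deterministic.
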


\subsubsection{Impact of Boolean Abstractions on Permissivity.}

To obtain an equi-realizable Boolean abstraction, it
is not necessary to consider \VR, and instead a feasible set is sufficient.
The computation of a feasible set is faster, and generates smaller $\phiB$ formulae~\cite{RoCe23a}.
However, there is a price to pay in terms of how permissive the shield is.
%
%
In practice, it regularly happens that the environment plays moves that are not optimal in the sense that other moves would leave less choice to the system.
\begin{example} \label{ex:shieldPermissivity}
    Recall Ex.~\ref{ex:runningEx}.  Note that a \CB from $\phiB$ can respond with $f_c$ of choices for $e_0$, $e_0^+$ and $e_1$.
    Now consider a $\phiB'$ that ignores $e_0^+$ and its eligible $f_c$. Thus, a \CB' from $\phiB'$ can only respond with $f_c$ for $e_0$ and $e_1$.
%
%
For the sake of the argument, consider the input $x:0$ forces to satisfy $f_{c_2}$ in \CB, or $f_{c_2}$ or $f_{c_3}$ in \CB'.
Thus, a candidate output  $y:2$ corresponds to holding $((x<10)\And\neg(y>9)\And\neg(y\leq x))$ which is exactly $f_{c_3}$ allowed by $e_0^+$, but not by $e_0$ (considered by $\phiB$ but not by $\phiB'$).
Therefore, the corresponding $\ShT$ using the winning region $\WB$ for $\phiB'$ would override the output candidate $y:2$ provided by $D$ (generating an output $\vys'$ such that $\vys'$ holds $f_{c_2}$ ; e.g., $y':-1$); i.e., it incorrectly interprets that candidate $\vys$ is dangerous, whereas $\ShT$ using the winning region $\WB$ for $\phiB$ would not override $\vys$.
\end{example}

%

%
The most permissive shield uses the \WR of the complete set of valid reactions \VR, but note that computing \WR and \VR is more expensive than \CB and \MVR.
However, for efficiency reasons the most permissive shield is usually not computed, either because (1) the abstraction algorithm for computing valid reactions computes a non-strict covering or because (2) the synthesis of \WR does not terminate (specially in liveness specifications).
Note that not only the cost of constructing $\ShT$ is relevant, but also other design decisions: if we want the policy of $D$ to dominate, then we need  $\ShT$ to be as permissive as possible, whereas if we want $\phiT$ to dominate (e.g., in specially critical tasks), then we want $\ShT$ to be intrusive.
Moreover, we can guarantee maximal permissivity computing \VR (and \WR), whereas we can also guarantee maximal intrusion computing \MVR (and \CB). Indeed, it is always possible to compute such \MVR.

\begin{theorem}
    \label{thm:maximalIntrusion}
    Maximally intrusive shield synthesis (i.e., \MVR and \CB) is decidable.    
\end{theorem}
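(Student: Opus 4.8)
The plan is to reduce the decidability of maximally intrusive shield synthesis to three ingredients: (i) the computability of $\MVR$, (ii) the equi-realizability of the Booleanization built from $\MVR$, already established by Thm.~\ref{thm:equirealizabilityFVR}, and (iii) the decidability of pure Boolean $\LTL$ synthesis. Once these are in place, the construction of the shield from a synthesized \CB (composing it with the \Partitioner and the \Provider to obtain \CT, as in Sec.~\ref{subsec:construction}) is purely mechanical and terminating, so the entire pipeline is an algorithm.

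First I would argue that $\MVR$ is computable. Since $\phiT$ contains finitely many literals, $\sss$ is finite, hence the set of choices $\calC = 2^{\sss}$ is finite and the space of candidate reactions $2^{\calC}$ is finite as well. It therefore suffices to decide, for each candidate $r$, whether it is valid, i.e. whether the closed sentence $\exists \xs. f_r(\xs)$ holds in $\calT$. Unfolding $f_r(\xs) = (\bigwedge_{c\in r}\exists\ys. f_c) \And (\bigwedge_{c\notin r}\forall\ys\neg f_c)$ and prenexing (rename the existential witnesses for $\ys$ to fresh variables and pull both quantifier blocks outward), this sentence lies in the $\exists^*\forall^*$ fragment of $\calT$. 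This is precisely the fragment that Boolean abstraction already assumes to be decidable, so $\VR$ is computable; $\MVR$ is then obtained by retaining the $\subseteq$-minimal elements of the finite poset $(\VR, \subseteq)$, which exist whenever $\VR$ is nonempty (and the empty case is trivial).

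Next I would invoke equi-realizability. The set $\MVR$ trivially satisfies $\MVR \subseteq \MVR \subseteq \VR$ and is thus feasible, so Thm.~\ref{thm:equirealizabilityFVR} yields that $\phiB = \phiT[l_i \leftarrow s_i] \wedge \phiEx(\MVR)$ is equi-realizable to $\phiT$. Because $\phiB$ is a pure $\LTL$ formula, deciding its realizability and synthesizing a controller \CB is decidable by standard $\LTL$ synthesis. Realizability of $\phiB$ decides realizability of $\phiT$, and in the realizable case we obtain \CB; composing it as above produces the maximally intrusive shield.

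The main obstacle is the reaction-validity test underlying the computation of $\MVR$: everything hinges on $\exists \xs. f_r(\xs)$ being decidable, which is exactly the $\exists^*\forall^*$ decidability premise inherited from Boolean abstraction. Once that premise is granted, the finiteness of $2^{\calC}$ makes the enumeration and the subsequent $\subseteq$-minimization routine, and the remaining steps are off-the-shelf; the only other thing worth checking is that $\MVR$ is feasible, which is immediate from $\MVR \subseteq \MVR \subseteq \VR$.
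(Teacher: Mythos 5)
Your proof is correct, but it reaches $\MVR$ by a different route than the paper. The paper's proof presupposes a feasible set $R$ (as produced by the existing Boolean-abstraction algorithms) and prunes it: for every comparable pair $r \subseteq r'$ it discards the larger reaction, iterating until a fix-point in which all remaining reactions are pairwise incomparable, and argues this fix-point is exactly $\MVR$; decidability of \CB synthesis then closes the argument. You instead compute $\VR$ from scratch, enumerating the finite candidate space $2^{\calC}$ and deciding each validity query $\exists \xs. f_r(\xs)$ by prenexing it into the $\exists^*\forall^*$ fragment, then taking the $\subseteq$-minimal elements; you also explicitly route equi-realizability through Thm.~\ref{thm:equirealizabilityFVR} and then invoke decidability of Boolean \LTL synthesis. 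Your version is more self-contained and makes explicit two things the paper leaves implicit --- precisely where the $\exists^*\forall^*$ decidability assumption is used, and why the Booleanization over $\MVR$ remains equi-realizable --- at the cost of a doubly-exponential enumeration. The paper's version buys practicality: it piggybacks on whatever (typically much smaller) feasible set an abstraction algorithm already returns, and it is what justifies the remark following the theorem that the proof itself ``suggests such (exhaustive) algorithm to provide \MVR.'' Both arguments establish the same decidability claim, and your feasibility observation $\MVR \subseteq \MVR \subseteq \VR$ is the same trivial step the paper relies on implicitly.
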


In Ex.~\ref{ex:shieldPermissivity} the alternative $\phiB'$ that ignores the (playable) choice $e_0^+$ is exactly comparing $r_{e_0}=\{c_1, c_2\}$ and $r_{e_0^+}=\{c_1, c_2, c_3\}$ and constructing a feasible $R'$, where $r_{e_0^+} \notin R'$, since $r_{e_0} \subset r_{e_0^+}$ . 
In this case, $R'$ is an \MVR.
%
%
%


\section{Optimizations in $\LTLt$ Shields} \label{subsec:optimized}


\subsubsection{Shields Optimized in $\calT$.}

We know that, given $\vxs$, the \Provider component computes an output $\vys'$ from $c$ and $r$ such that $\vys'$ is a model of the formula
  $\psi = \exists \ys'. \big(f_r(\xs) \Into f_c(\ys',\xs)\big)[\xs\leftarrow\vxs]$.
Moreover, not only is $\psi$ guaranteed to be satisfiable (by Lemma.~\ref{lemm:satisfiability}), but usually has several models.
This implies that the engineer can select some $\vys'$ that are preferable 
over others, depending on different criteria represented by objective functions, such as 
$\psi_{f^+}$, where $f^+=\text{min}(|y-y'|)$, i.e., the objective function that minimizes the distance between $v_y$ by $D$ and $v_y'$ by $\ShT'$.
This is because these are not linear properties.
However, this optimization would be very relevant in the context of shielding, because, without loss of generality, it expresses that $\vys'$ is the safe correction by $\ShT$ closest to the unsafe $\vys$ by $D$. 
To solve this, we used maximum satisfiability, which adds \emph{soft constraints} $\mathcal{M} = \{\phi_1,\phi_2,...\}$ to $\psi$, such that $\psi(\mathcal{M}) = \exists \ys'. (f_c(\xs,\ys') ^{+}{}\bigwedge^{|\mathcal{M}|}_{i=0} \phi_i)[\xs \leftarrow \vxs]$, where $^{+}{}\bigwedge$ denotes a soft conjunction, meaning that the right-hand side is satisfied only if possible.
To better illustrate this, we use a single variable $y$, although this concept can also be extended to other notions of distance with multiple variables $\ys$ (e.g., Euclidean distance)
\footnote{In our
	extended paper~\cite{RoAmCoSaKa24} we include a formal discussion on how to 
	use 
	optimizations
depending on $\calT$.}.

\begin{example}    
Consider again Ex.~\ref{ex:shieldPermissivity} and
let $\mathcal{M}=\{(y'>5)\}$. Then $\psi(\mathcal{M}) = \exists y'. (f_c(x,y') ^{+}{}\wedge (y'>5))[x \leftarrow 0]$ does not return $y':2$, but some $y' \in [6,9]$. 
In addition, we use maximum satisfiability to express that $v_y'$ is the safe correction that is the closest to an unsafe $v_y$, for which we add $\phi(\xs,y,y') = \forall z. (f_c(\xs,z) \Into (|y'-y|<|z-y|))$ to $\mathcal{M}$ in $\psi(\mathcal{M})$, so that $\psi(\mathcal{M}) = (f_c(\xs,y) \wedge (y'>5) \wedge \phi(\xs,y,y'))[\xs \leftarrow \vxs, y \leftarrow v_y]$.
Thus, given a $y:4$ labelled as unsafe, $\psi(\mathcal{M})$ will not return an arbitrary model $y' \in [6,9]$, but the concrete $y':6$.
This converges in $\ThZ$.
\end{example}
%

Note that to enforce $\vys'$ closest to $\vys$, we need to additionally extended the \Provider component with input $\vys$.
Also, note that two soft constraints can be contrary to each other, in which case the engineer has to establish priorities using weights.
It is also important to note that using soft constraints does not compromise correctness, as any solution found by the solver that supports soft constraints will satisfy the hard constraints as well. Hence, the correctness of Thm.~\ref{thm:shieldsExist} and Thm.~\ref{thm:shieldsWR} remains in-place.
Note that not only $\ShT$ can minimize the distance to $D$ in arithmetic $\calT$, but in any $\calT$ for which engineers define a metric space.

\subsubsection{Permissive Optimization.}

We showed that a $\ShT$ can provide an output $\vys'$ that is the safe correction by closest to the unsafe $\vys$ by $D$.
Moreover, previously we showed that we can generate multiple strategies using \WR. 
Therefore, we can optimize $\vys'$ using combinations in the \WR; for instance, return $\vys'$ such that distance to $\vys$ is minimal and $\vys'$ has been chosen among the strategies represented by automata with less than $n$ states (i.e., using bounded synthesis with bound $n$).
We illustrate this in~\cite{RoAmCoSaKa24}.
    
%
\begin{theorem} \label{thm:wrDistance}
    Let $\calT=\ThZ$ and let a candidate output $\vys$ considered unsafe.
    Let a WR in an arbitrary state $q_k$ and an input $\vxs$, which yields a set $C=\{c_k,c_j,c_i,...\}$ of choices that are safe for the system. 
    We denote with $F=\{f_{c_k},f_{c_j},f{c_i}\}$ the set of characteristic choice functions.
    There is always a value $\vys'$ of $y'$ satisfying $f_r(\vxs) \Into f_c(y',\vxs)$, where $f_c \in F$, such that for $\vys''$ of $y''$ satisfying $f_r(\vxs) \Into f_c'(y'',\vxs)$, where $f_c' \in F$ and $f_c \neq f_c'$, then the distance from $\vys'$ to $\vys$ is smaller or equal than the distance from $\vys''$ to $\vys$.
\end{theorem}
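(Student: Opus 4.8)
The plan is to reduce the claim to two ingredients: the per-choice closest-element guarantee of Lemma~\ref{lem:Zclosest} and a finiteness argument over the set $C$ of safe choices. First I would fix the input $\vxs$ and the unsafe candidate $\vys$, and for each choice $c \in C$ consider the set $S_c = \{y' \in \mathbb{Z} \mid f_r(\vxs) \Into f_c(y',\vxs) \text{ holds}\}$ of integer outputs consistent with that choice. Since $\vxs$ is the actual environment move, $f_r(\vxs)$ holds, so $S_c$ coincides with $\{y' \mid f_c(y',\vxs) \text{ holds}\}$. Each $c \in C$ is a choice that the winning region (Sec.~\ref{sec:prelim}) marks as safe, hence playable for the reaction $r$ associated to $\vxs$; therefore $\exists \ys. f_c(\vxs,\ys)$ holds and each $S_c$ is nonempty. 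This is exactly the hypothesis needed to invoke the closest-element result, and it is also the content of Lemma~\ref{lemm:satisfiability} on the satisfiability side.

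Second, I would build the global minimizer in two layers. By Lemma~\ref{lem:Zclosest}, applied in $\ThZ$, each nonempty $S_c$ contains an element $y_c$ realizing the minimal distance $|y_c - \vys|$ to $\vys$; equivalently, because the distances $|y' - \vys|$ range over the well-ordered set of nonnegative integers, every nonempty $S_c$ attains its minimal distance. Because the winning region is finite, $C$ (and hence the family $F$) is a finite set, so I would take $\vys'$ to be the $y_c$ minimizing $|y_c - \vys|$ over $c \in C$; this minimum is attained at some $c^\ast \in C$ since it ranges over a finite set. Writing $f_c = f_{c^\ast}$, for any other choice $c' \in C$ with $c' \neq c^\ast$ and any $\vys''$ satisfying $f_r(\vxs) \Into f_{c'}(\vys'',\vxs)$ we obtain $|\vys'' - \vys| \geq |y_{c'} - \vys| \geq |\vys' - \vys|$, which is precisely the claimed inequality.

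The key step, and the only place where the theory genuinely matters, is the existence of a per-choice closest element: it rests on the discreteness of $\ThZ$, since it is the well-ordering of $\mathbb{N}$ that guarantees the infimum of the integer distances is \emph{attained} rather than merely approached. I expect the main subtlety to be invoking Lemma~\ref{lem:Zclosest} for the fixed $\vxs$ rather than in its universally quantified form, and confirming that every $c \in C$ is playable so that each $S_c$ is nonempty; the analogous statement fails over $\ThR$, where (as Lemma~\ref{lem:Rclosest} indicates) only an $\epsilon$-approximate closest value exists, and the theorem's restriction to $\calT = \ThZ$ is essential for this reason. The remaining finiteness-of-$C$ and minimum-over-a-finite-set arguments are routine.
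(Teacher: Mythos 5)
Your proof is correct and matches the approach the paper intends: the paper actually states Thm.~\ref{thm:wrDistance} \emph{without} an explicit proof, relying on the immediately preceding Lemma~\ref{lem:Zclosest}, and your two-layer argument---per-choice attained minima via Lemma~\ref{lem:Zclosest}/well-ordering of the integer distances, then a minimum over the finitely many safe choices, with the correct observation that playability of each $c \in C$ under the reaction $r$ with $f_r(\vxs)$ true makes each $S_c$ nonempty---is exactly the missing formalization. One remark: your claim that the restriction to $\ThZ$ is essential is the mathematically careful reading (exact minima need not be attained in $\ThR$, cf.\ Lemma~\ref{lem:Rclosest}), even though the paper loosely asserts afterwards that the theorem ``holds for other decidable $\calT$''.
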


Note that Thm.~\ref{thm:wrDistance} holds for other decidable $\calT$. 
This advantage is unique to $\ShT$ and it offers yet another permissivity layer to measure distance with respect to the policy of $D$.


 \section{Related Work and Conclusion}
\label{sec:related-work}

\subsubsection{Related Work.}
Classic shielding approaches~\cite{BlKoKoWa15,KogAlshBloemHumKogTopWang17,AlBloEh18,AvnBloeChattHenzKogPrang19} focus on properties expressed in Boolean LTL, and are incompatible for systems with richer-data domains: i.e., they need explicit \textbf{manual} discretization of the requirements.
In the other hand, we have a sound procedure that directly takes $\LTLt$ specifications (via~\cite{RoCe24})
which we adapted to shielding.
This is fundamentally new, and also that we can also optimize outputs with respect to erroneous candidates.

\subsubsection{Competing methods}

%
%
%

\begin{table}[t!]
\centering
  \begin{tabular}{c|cc|cccccc}
    \multirow{2}{*}{\textit{Req.}} & \multicolumn{2}{|c|}{(Wu et al.)} & \multicolumn{6}{|c}{\textit{Ours} } \\
        & $\mathbb{B}$ & $\calT$ & $\mathbb{B}$ & $\mathbb{B}'$ & $\calT$ & $\ThZ$ & $\calT_A$ & $\calT_B$\\
    \hline
    $\varphi_0$ & $30$ & $631$ & $19$ & $17$ & $171$ & $173$ & $183$ & $184$\\
    \hline
    $\varphi_1$ & $41$ & $590$ & 22  & 21 & 190 & 192 & 199 & 212\\
    \hline
    $\varphi_2$ & $80$ & $520$ & 24 & 20 & 214 & 194 & 214 & 199 \\
    \hline
    $\varphi_3$ & $45$ & $340$ & 22 & 18 & 106 & 105 & 120 & 105 \\
    \hline
    $\varphi_4$ & $50$ & $640$ & 21 & 17 & 118 & 122 & 120 & 117 \\
    \hline
    $\varphi_5$ & $80$ & $520$ & 21 & 16 & 124 & 124 & 127 & 156 \\
    \hline
    $\varphi_6$ & $37$ & $370$ & 16 & 14 & 141 & 156 & 148 & 151 \\
    \hline
    $\varphi_7$ & $49$ & $710$ & 20 & 20 & 163 & 159 & 165 & 173 \\
    \hline
    $\varphi_8$ & $45$ & $580$ & 21 & 15 & 133 & 134 & 143 & 189  \\
    \hline
    $\varphi_9$ & $18$ & $610$ & 14 & 11 & 170 & 179 & 172 & 203 \\
    \hline
    $\varphi_{10}$ & $50$ & $690$ & 25 & 19 & 173 & 168 & 182 & 177  \\
    \hline
    $\varphi_{11}$ & $31$ & $530$ & 21 & 17 & 177 & 178 & 191 & 177 \\
    \hline
    $\varphi_{12}$ & $57$ & $510$ & 29 & 22 & 156 & 155 & 179 & 182 \\
  \end{tabular}
  \caption{Comparison of our approach with \cite{WuMaDeWa19}, 
  measured in \protect$0.1 \mu s$ for $\mathbb{B}$ and in $\mu s$ for $\calT$, $\ThZ$, $\calT_A$ and $\calT_B$.}
  \label{tab:RealCyb}
\end{table}
%
%
%
The approach by \cite{WuMaDeWa19} is, to the best of our knowledge, the only previous successful attempt to compute rich-data shields.
We compared our work with theirs: for the $13$ different specifications from \cite{WuMaDeWa19}, we generated our shield and measured time for computing: (i) the Boolean step (clm. $\mathbb{B}$), and (ii) producing the final output (clm. $\calT$). We compared our results at Tab.~\ref{tab:RealCyb}.
%
%
For the first task, our approach, on average, requires less than $0.21\mu{}s$, while they take more than twice as long on average (with over $0.47\mu{}s$). Our approach was also more efficient in the second task, taking on average about $0.157ms$, whereas they take required  $0.557ms$ on average.
%
%
Moreover, note that our ($\mathbb{B}$) contains both detection and correction of candidate outputs, whereas their ($\mathbb{B}$) is only detection.
%
It is important to note another key advantage of our approach: ~\cite{WuMaDeWa19} necessarily generates a \WR, whereas we can also construct \CT (clm. $\mathbb{B}'$), performs considerably faster (and can operate leveraging highly matured techniques, e.g., bounded synthesis).
In addition to this, we \cite{WuMaDeWa19} presents a monolithic method that to be used with specifications containing only linear real arithmetic $\ThZ$. Hence, if we slightly modify $\calT$ to $\ThZ$, then the method in~\cite{WuMaDeWa19} cannot provide an appropriate shield, whereas we do (clm. $\ThZ$).
Indeed, our method encodes \emph{any} arbitrary 
$\exists^*\forall^*$ decidable fragment of $\calT$ (e.g., non-linear arithmetic or the array property fragment~\cite{BrMaSi06}).
%
Additionally, we can optimize the outputs of $\ShT$ with respect to different criteria, 
such as returning the smallest/greatest safe output  (clm. $\calT_A$) and the output closest to the candidate (clm. $\calT_B$). 
For this, we used Z3 with optimization \cite{BjoDunFleck15}.
Tab.~\ref{tab:RealCyb} shows that our approach can manage both rich data and 
temporal dynamics. 
%

 \subsubsection{Conclusion.}
%
In this work, we present the first general methods for shielding properties encoded in $\LTLt$. 
These allows engineers to guarantee the safe behavior of DRL agents in complex, reactive environments.
Specifically, we demonstrate how shields can be computed 
from a controller and from a winning region.
This is not simply a direct application of synthesis, instead, we can guarantee maximally and minimally permissive shields, as well as shields optimized with respect to objective functions in arbitrary decidable $\calT$.
%
%
We also empirically demonstrate its applicability, 
(see \cite{CoAmRoKaSaFo24} for an actual application).
%

A next step is to adapt shields modulo theories to probabilistic settings \cite{PraKoPoBlo21,CarJanJunTop23}
and planning \cite{CaBienMcil19}. 
%
%
%
We also want to leverage recent results on finite-trace $\LTLt$~\cite{GeGiGiWi23}, 
to model more expressive shields. 

\section{Acknowledgments}
	The work of Rodríguez and Sánchez was funded in part by PRODIGY Project 
	(TED2021-132464B-I00) — funded
	by MCIN/AEI/10.13039/501100011033/ and the European Union 
	NextGenerationEU/PRTR — by
	the DECO Project (PID2022-138072OB-I00) — funded by 
	MCIN/AEI/10.13039/501100011033 and
	by the ESF, as well as by a research grant from Nomadic Labs and the Tezos 
	Foundation.
	This work was partially funded by the European Union (ERC, VeriDeL, 
	101112713). 
	Views and opinions expressed are however those of the author(s) only and do 
	not 
	necessarily reflect those of the European Union or the European Research 
	Council Executive Agency. Neither the European Union nor the granting 
	authority 
	can be held responsible for them.
	The work of Amir was further supported by a scholarship from the Clore 
	Israel 
	Foundation as well as a Rothschild Fellowship granted by the Yad Hanadiv 
	Foundation.

\bibliography{aaai25}

\end{document}